\documentclass[preprint]{elsarticle}

\usepackage{lineno,hyperref}
\modulolinenumbers[5]

\journal{Journal of \LaTeX\ Templates}
\usepackage{algorithm}
\usepackage{amsmath}
\usepackage{bm}
\usepackage{graphicx}
\usepackage{amssymb}
\usepackage{epstopdf}
\usepackage{array,longtable,multirow}
\usepackage{multicol}
\usepackage{enumitem}
\usepackage{pdflscape}
\usepackage{graphicx}
\usepackage{amsmath,amssymb,amsfonts,epsfig,amsthm}
\usepackage{algpseudocode}
\usepackage{color}
\usepackage[usenames,dvipsnames]{xcolor}
\usepackage{fancyhdr}
\usepackage{verbatim}
\usepackage{etoolbox}
\usepackage{url}
\usepackage{enumitem}
\usepackage{tikz-cd}
\usepackage{tikz}
\usepackage[utf8]{inputenc}
\usepackage{adjustbox}
\usepackage{cases}

\makeatletter
\renewcommand*\env@matrix[1][*\c@MaxMatrixCols c]{%
  \hskip -\arraycolsep
  \let\@ifnextchar\new@ifnextchar
  \array{#1}}
\makeatother

\setlength\parindent{0pt}


\newtheorem{theorem}{Theorem}[section]

\newtheorem{lemma}[theorem]{Lemma}

\theoremstyle{definition}
\newtheorem{definition}[theorem]{Definition}
\newtheorem{remark}{Remark}
\newtheorem*{notation}{Notation}

\newtheorem{example}{Example}

\newcommand{\R}{\mathfrak R[x]/\langle x^n-1\rangle}
\newcommand{\FR}{\mathbb F_{p^m}[x]/\langle x^n-1\rangle}
\newcommand{\la}{\langle}
\newcommand{\ra}{\rangle}



%
%
\begin{document}

\begin{frontmatter}

\title{Determination for minimum symbol-pair and RT weights via torsional degrees of repeated-root cyclic codes}

\author{Boran Kim}
\address{Department of Mathematics Education, Teachers College, Kyungpook National University, 80 Daehakro, Bukgu, Daegu 41566 , Republic of Korea.}
\fntext[myfootnote]{E-mail address: bkim21@knu.ac.kr (Boran Kim)\\
Boran Kim is supported by Basic Science Research Program through the National Research Foundation of Korea(NRF) grant funded by the Korea government(MSIT)(NRF-2021R1C1C2012517).}


\begin{abstract}
There are various metrics for researching error-correcting codes.
Especially, high-density data storage system gives the existence of the inconsistency for the reading and writing process.
The symbol-pair metric is motivated for outputs that have overlapping pairs of symbols in a certain channel.
The Rosenbloom-Tsfasman (RT) metric is introduced since there exists a problem that is related to transmission over several parallel communication channels with some channels not available for the transmission.
In this paper, we determine the minimum symbol-pair weight and RT weight of repeated-root cyclic codes over $\mathfrak R=\Bbb F_{p^m}[u]/\la u^4\ra$ of length $n=p^k$.
For the determination, we explicitly present third torsional degree for all different types of cyclic codes over $\mathfrak R$ of length $n$.

\end{abstract}

\begin{keyword}
cyclic code, torsional degree, symbol-pair weight, RT weight.
\MSC[2010] Primary 94B15, Secondary 94B05
\end{keyword}

\end{frontmatter}

\section{Introduction}
In coding theory, one of the interesting goals is for constructing codes that give easier encoding and decoding processes. 
By the perspective, cyclic codes have importance in this theory because they have efficient encoding and decoding algorithms.
Moreover, the cyclic codes give many important applications in the other areas, such as cryptography and sequence design \cite{CDY, DW, DYT}.
In particular, the repeated-root cyclic codes are first introduced in \cite{CDL, SR}.
They present that repeated-root cyclic codes have a concatenated construction and many optimal codes; from this, many researchers have been studying repeated-root cyclic codes over various rings.

There are various metrics for researching error-correcting codes; such as Hamming, Lee, Euclidean, symbol-pair, and RT metrics.
Especially, high-density data storage system gives the existence of the inconsistency for the reading and writing process.
The symbol-pair metric is motivated for outputs that have overlapping pairs of symbols in a certain channel \cite{CB, CB2}.
The Rosenbloom-Tsfasman (RT) metric is introduced \cite{RT} since there exists a problem that is related to transmission over several parallel communication channels with some channels not available for the transmission.

For many years, linear codes have active developments over various rings \cite{GD, HKKL, KDS, KL1, KL2, KL3, LY, S, HKN}.
Recently, finding symbol-pair weight and RT weight is a great challenge in coding theory \cite{CLL, LG, SZW}.
In \cite{DKKSSY}, they study MDS symbol-pair repeated-root constacyclic codes over $\mathbb F_{p^m}+u\mathbb F_{p^m}$.
The author gives explicit classification for cyclic codes over $\mathbb F_{p^m}[u]/\la u^3 \ra$ with torsional degrees \cite{KLD}. 
Sharma and Sidana \cite{SS} consider symbol-pair weight and RT weight for repeated-root constacyclic codes over finite commutative chain rings.
They give the relationship between a certain minimum weight and torsional degrees for constacyclic codes.
In detail, if we know about the $i$-th torsional degree for constacyclic codes over $\mathbb F_{p^m}[u]/\la u^{i+1}\ra$, then we can determine the symbol-pair weight and RT weight.
Our results are motivated by this fact.
The determination for $i$-th torsional degree is very difficult in general. 
So we focus on the specific ring $\mathbb F_{p^m}[u]/\la u^4 \ra$, and we explicitly determine the symbol-pair weight and RT weight by using third torsional degree. 

In this paper, we determine the third torsional degrees for all different types of repeated-root cyclic codes over $\mathfrak R=\mathbb F_{p^m}[u]/\la u^4 \ra$ of length $n=p^k$ (Theorems \ref{09_07_2020_thm_1}, \ref{09_07_2020_thm_2} and \ref{09_09_2020_thm_1}).
Through this information, we give the symbol-pair weight and RT weight for the cyclic codes (Theorem \ref{09_10_2020_cor_1}).
We present some examples for supporting our results (Examples \ref{09_10_2020_ex_2}, \ref{09_10_2020_ex_3} and \ref{09_17_2020_ex_1}).

\section{Preliminaries}\label{pre}
Let $\mathfrak R$ be a finite commutative ring with unity.
An $\mathfrak R$-submodule of $\mathfrak R^n$ is called a {\it code} $C$ of length $n$ over a ring $\mathfrak R$.
Any element $c=(c_1,\ldots,c_n)$ in $C$ is called a {\it codeword}.
Henceforth, a code means a linear code, and $\mathfrak R$ is the ring $\mathbb F_{p^m}[u]/\langle u^4 \rangle$, where $p$ is a prime number and $m\ge 1$; the ring $\Bbb F_{p^m}[u]/\la u^4 \ra$ is a finite commutative Frobenius ring.

\begin{definition}\label{09_09_2020_def_1}
Let $\mathfrak R$ be a finite commutative Frobenius ring, and $g(x)$ be a polynomial in $\mathfrak R[x]$.
A {\it polycyclic code} over $\mathfrak R$ is an ideal in $\mathfrak R[x]/\la g(x)\ra$.
\begin{itemize}
    \item[{(i)}] If $g(x)=x^n-1$, then the code is called a {\it cyclic code} of length $n$.
    \item[{(ii)}] If $g(x)=x^n+1$, then the code is called a {\it negacyclic code} of length $n$.
    \item[{(iii)}] If $g(x)=x^n+\lambda$, then the code is called a {\it constacyclic code} of length $n$, where $\lambda$ is a unit in $\mathfrak R$.
\end{itemize}
\end{definition}

In this paper, we deal with a repeated-root cyclic code over $\mathfrak R=\Bbb F_{p^m}[u]/\la u^4 \ra$ of length $n=p^k$.
We present a cyclic code $C$ over $\mathfrak R$ of length $n$ as an ideal in $\R$ by Definition \ref{09_09_2020_def_1}.
We consider a codeword in $C$ as a polynomial in $\R$.
Furthermore, the generator polynomials for a cyclic code over $\mathfrak R$ are given in Lemma \ref{09_09_2020_lem_1}.
Before we give the generator polynomials, we introduce a torsion code of a cyclic code over $\mathfrak R$ of length $n$.

\begin{definition}
Let $C$ be a cyclic code over $\mathfrak R$ of length $n=p^k$. 
Then the {\it $i$-th torsion code} of $C$ is defined as
$$Tor_i(C)=\{\mu(g(x))\in \FR: u^i g(x)\in C\},$$
where $\mu$ is a natural projection map from $\R$ to $\FR$ $(0 \le i \le 3)$.
Especially, when $i=0$, the code $Tor_0(C)$ is called the {\it residue code} of $C$. 
\end{definition}

In \cite{SS}, the authors give the following results, Lemmas \ref{09_09_2020_lem_2} and \ref{09_09_2020_lem_1}, for the ring $\Bbb F_{p^m}[u]/\la u^{i+1} \ra$.
In this current work, we deal with the ring $\mathfrak R=\Bbb F_{p^m}[u]/\la u^4 \ra$, thus the lemmas are rewritten for this ring $\mathfrak R$.

\begin{lemma}\cite[Theorem 9]{SS}\label{09_09_2020_lem_2} 
Let $C$ be a cyclic code over $\mathfrak R$ of length $n=p^k$.
\begin{itemize}
    \item[{\em (i)}] The $i$-th torsion code $Tor_i(C)$ of $C$ is a cyclic code over $\mathbb F_{p^m}$, and $$Tor_i(C)=\langle (x-1)^{t_i}\rangle,$$ 
    where $t_i$ is an integer satisfying $0 \le t_i \le p^k$.
    The integer $t_i$ is called the {\it $i$-th torsional degree} of $C$ $(0 \le i \le 3)$.
    \item[{\em (ii)}] We have $0 \le t_3 \le t_2 \le t_1 \le t_0 \le p^k$.
\end{itemize}
\end{lemma}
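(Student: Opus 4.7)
The plan is to establish part (i) by identifying $Tor_i(C)$ as an ideal of $\FR$ and then invoking the chain-ring structure of this quotient when $n=p^k$. To this end, I would first introduce the colon ideal
$$(C:u^i)=\{g(x)\in\R : u^i g(x)\in C\},$$
and verify directly that it is an ideal of $\R$: closure under addition is immediate from $C$ being additively closed, and closure under scalar multiplication follows because $u^i(rg)=r(u^i g)\in C$ for any $r\in\mathfrak R$. Then I would observe that, by construction, $Tor_i(C)=\mu((C:u^i))$. Since the natural projection $\mu:\R\to\FR$ is a surjective ring homomorphism, the image of any ideal is an ideal, so $Tor_i(C)$ is an ideal of $\FR$, i.e.\ a cyclic code over $\mathbb F_{p^m}$ of length $n$.

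To pin down the generator, I would use the fact that $n=p^k$ and $\mathrm{char}(\mathbb F_{p^m})=p$ together yield $x^{p^k}-1=(x-1)^{p^k}$, so the substitution $y=x-1$ exhibits $\FR\cong \mathbb F_{p^m}[y]/\langle y^{p^k}\rangle$, a finite chain ring whose only ideals are $\langle y^t\rangle=\langle (x-1)^t\rangle$ for $0\le t\le p^k$. Applying this classification to the ideal $Tor_i(C)$ produces a unique integer $t_i\in\{0,1,\ldots,p^k\}$ with $Tor_i(C)=\langle (x-1)^{t_i}\rangle$, which is precisely the $i$-th torsional degree.

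For part (ii), I would extract the ordering from the elementary containment $(C:u^i)\subseteq (C:u^{i+1})$: if $u^i g\in C$ then $u^{i+1}g=u\cdot u^i g\in C$ because $C$ is an $\mathfrak R$-module. Applying $\mu$ yields $Tor_i(C)\subseteq Tor_{i+1}(C)$, and comparing generators in the chain ring $\FR$ translates this inclusion into $t_{i+1}\le t_i$. Combined with the bounds $0\le t_i\le p^k$ from part (i), this produces the chain $0\le t_3\le t_2\le t_1\le t_0\le p^k$.

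There is no real technical obstacle here; the argument is essentially bookkeeping once two ingredients are in place. The first is the passage from the $\mathfrak R$-module $C$ to the ideal $(C:u^i)$ of $\R$, after which quotienting by $u$ converts everything into ordinary commutative-algebra statements about ideals in $\FR$. The second is the chain-ring identification $\FR\cong\mathbb F_{p^m}[y]/\langle y^{p^k}\rangle$, which is the feature that makes repeated-root cyclic codes of length $p^k$ so tractable. The only point that warrants a brief comment is the well-definedness of $Tor_i(C)$ as a subset of $\FR$: the definition takes the $\mu$-image of the set $(C:u^i)$, rather than demanding that the condition $u^i g\in C$ be representative-independent, so no compatibility check is required.
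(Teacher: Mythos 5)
Your argument is correct and is the standard proof of this fact; the paper itself gives no proof, importing the lemma by citation from \cite[Theorem 9]{SS}, so there is nothing to diverge from. The two ingredients you isolate --- the colon ideal $(C:u^i)$ with $Tor_i(C)=\mu((C:u^i))$, and the identification of $\mathbb F_{p^m}[x]/\langle x^{p^k}-1\rangle$ with the chain ring $\mathbb F_{p^m}[y]/\langle y^{p^k}\rangle$ via $x^{p^k}-1=(x-1)^{p^k}$ --- are exactly what the cited result rests on, and the monotonicity $t_{i+1}\le t_i$ from $(C:u^i)\subseteq(C:u^{i+1})$ is right. One small wording point: to conclude that $Tor_i(C)$ is \emph{cyclic} (an ideal of $\FR$, not merely an $\mathbb F_{p^m}$-subspace) you need $(C:u^i)$ closed under multiplication by all of $\R$ --- in particular by $x$ --- not just under $\mathfrak R$-scalars as you wrote; the same one-line computation $r(u^ig)=u^i(rg)$ works verbatim for $r\in\R$ because $C$ is an ideal of $\R$, so this is only a matter of phrasing.
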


\begin{lemma}\cite[Theorem 10]{SS}\label{09_09_2020_lem_1}
Let $C$ be a cyclic code over $\mathfrak R$ of length $n=p^k$.
Let $Tor_i(C)$ be the $i$-th torsional code of $C$ with the $i$-th torsional degree $t_i$ $(0 \le i \le 3)$.
Then the code $C$ is uniquely generated by the following four polynomials in $\R$:
\[
\begin{array}{l}
g_0(x)=(x-1)^{t_0}+u(x-1)^{k_1}p_1(x)+u^2(x-1)^{k_2}p_2(x)+u^3(x-1)^{k_3}p_3(x),\\
g_1(x)=u(x-1)^{t_1}+u^2(x-1)^{k_4}p_4(x)+u^3(x-1)^{k_5}p_5(x),\\
g_2(x)=u^2(x-1)^{t_2}+u^3(x-1)^{k_6}p_6(x),\\
g_3(x)=u^3(x-1)^{t_3},
\end{array}
\]
where $k_1<t_1$, $k_i<t_2$, $k_j<t_3$, and
$p_\ell(x)$ is a unit or zero in $\FR$ $(i=2,4, j=3,5,6 \mbox{~and~}\ell=1,\ldots,6)$.
Especially, if $t_i=p^k$ (resp. $t_i=0$), then $g_i(x)=0$ (resp. $g_i(x)=u^i$) for all $0 \le i \le 3$.
\end{lemma}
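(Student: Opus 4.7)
The plan is to construct the four generators by lifting the torsional degree data of Lemma \ref{09_09_2020_lem_2} into $\R$, and then to normalize their lower $u$-coefficients against the higher-indexed generators. Throughout the argument I would exploit the fact that, since $n=p^k$, we have $x^n-1=(x-1)^{p^k}$ in characteristic $p$, so $\FR$ is a local principal ideal ring with unique maximal ideal $\la x-1\ra$; consequently every nonzero element factors as $(x-1)^{\ell}q(x)$ with $q(x)$ a unit, which is precisely the factorization the lemma demands of each $p_\ell(x)$.

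First I would produce preliminary generators. For each $i \in \{0,1,2,3\}$, the fact that $(x-1)^{t_i} \in Tor_i(C)$ supplies some $h_i(x)\in \R$ with $\mu(h_i(x))=(x-1)^{t_i}$ and $u^i h_i(x)\in C$; setting $g_i(x):=u^i h_i(x)$ gives
\[g_i(x)=u^i(x-1)^{t_i}+\sum_{j=i+1}^{3}u^j a_{i,j-i}(x),\]
since terms of $u$-degree greater than three vanish as $u^4=0$. A standard $u$-filtration argument then shows $C=\la g_0,g_1,g_2,g_3\ra$: given $c=c_0+uc_1+u^2c_2+u^3c_3\in C$, the residue $c_0$ lies in $Tor_0(C)=\la(x-1)^{t_0}\ra$, so subtracting a suitable multiple of $g_0$ kills $c_0$; the result is $u$ times an element whose residue lies in $Tor_1(C)=\la(x-1)^{t_1}\ra$, and the argument continues inductively through $g_2$ and $g_3$.

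Next comes the normalization. Processing $g_0$, then $g_1$, then $g_2$ in that order, I would reduce each off-diagonal coefficient $a_{i,\ast}$ against the leading $(x-1)$-power of an appropriate lower-indexed generator. For example, to clean the $u$-coefficient of $g_0$, write $a_{0,1}(x)=q(x)(x-1)^{t_1}+r(x)$ with either $r=0$ or $r(x)=(x-1)^{k_1}p_1(x)$, $k_1<t_1$, $p_1$ a unit; then replace $g_0$ by $g_0-q(x)g_1$, which preserves the $u^0$-coefficient and places the $u^1$-coefficient in the required form. The crucial observation is that subtracting $q(x)g_i$ only alters coefficients of $u$-degree $\geq i$; hence reducing the $u^1$, then $u^2$, then $u^3$ coefficients of $g_0$ (using $g_1$, $g_2$, $g_3$ respectively), followed by the $u^2$ and $u^3$ coefficients of $g_1$ (using $g_2$, $g_3$), followed by the $u^3$ coefficient of $g_2$ (using $g_3$), terminates with all coefficients in the claimed form and never spoils a previously normalized coefficient.

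For uniqueness, suppose $g_0$ and $g_0'$ are two normalized generators sharing the leading term $(x-1)^{t_0}$. Then $g_0-g_0'\in C$ has zero $u^0$-coefficient, so there exists $\tilde e\in\R$ with $u\tilde e=g_0-g_0'\in C$, whence $\mu(\tilde e)\in Tor_1(C)=\la(x-1)^{t_1}\ra$. But $\mu(\tilde e)=(x-1)^{k_1}p_1-(x-1)^{k_1'}p_1'$ has $(x-1)$-adic valuation strictly less than $t_1$ unless it vanishes, so it must vanish, forcing $k_1=k_1'$ and $p_1=p_1'$ up to the natural indeterminacy modulo $(x-1)^{t_1-k_1}$. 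Iterating through the higher $u$-coefficients and through $g_1$ and $g_2$ yields full uniqueness. The main delicate point will be the cascading nature of the normalization: each elementary reduction modifies the higher $u$-coefficients of the generator being processed, so the order of reductions must be chosen so that all integer constraints $k_i<t_?$ survive simultaneously, and the degenerate cases $t_i\in\{0,p^k\}$ must be handled separately as the lemma explicitly indicates.
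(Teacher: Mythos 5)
This lemma is not proved in the paper at all: it is quoted (specialized from the chain ring $\mathbb F_{p^m}[u]/\la u^{i+1}\ra$ to $u^4$) from Sharma and Sidana \cite[Theorem 10]{SS}, so there is no internal argument to compare yours against. Your proof is nonetheless correct and follows the standard route for such chain rings: lift $(x-1)^{t_i}\in Tor_i(C)$ to an element $u^ih_i(x)\in C$, use the $u$-adic filtration together with $Tor_i(C)=\la (x-1)^{t_i}\ra$ to see that the four lifts generate $C$, and then reduce the $u^j$-coefficient of each generator modulo $(x-1)^{t_j}$ against the $j$-th generator in an order (increasing $u$-degree within each generator) that never disturbs a coefficient already normalized. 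The only point to make explicit is the caveat you already flag in the uniqueness step: the generators are literally unique only after one fixes canonical representatives of the units $p_\ell(x)$ modulo $(x-1)^{t_j-k_\ell}$ (equivalently, the data $t_i$, $k_\ell$ and the residue classes of the $p_\ell$ are uniquely determined by $C$); this is how ``uniquely generated'' is to be read in \cite{SS}, and your valuation argument --- a nonzero difference of $u$-coefficients would lie in $Tor_1(C)$ yet have $(x-1)$-valuation below $t_1$ --- is exactly the right justification.
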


Let $g(x)=(x-1)^{s_0}h_0(x)+u(x-1)^{s_1}h_1(x)+u^2(x-1)^{s_2}h_2(x)+u^3(x-1)^{s_3}h_3(x)$ be an arbitrary polynomial in $\R$, where $h_0(x)=0$ or $1$, $h_i(x)$ is a unit or zero in $\FR$ and $s_i\ge 0$ $(1 \le i \le 3)$.
In the polynomial $g(x)$, the term $u(x-1)^{s_1}h_1(x)$ (resp. $u^2(x-1)^{s_2}h_2(x)$, $u^3(x-1)^{s_3}h_3(x)$) of $g(x)$ is called a {\it $u$-part} (resp. {\it $u^2$-part, $u^3$-part}) of $g(x)$.

\smallskip

We note that, for any codeword $f(x)=\sum_{i=0}^{n-1}c_i(x-1)^i$ in $C$, we also consider the codeword $f(x)$ as the vector ${\bf c}=(c_0,\ldots,c_{n-1})$.

The next definition gives various metrics for the weight of a codeword in a code.

\begin{definition}
Let $C$ be a code over $\mathfrak R$ of length $n$, and ${\bf c} =(c_0,\ldots,c_{n-1})$ be a codeword in $C$.
\begin{itemize}
    \item[{(i)}] The {\it Hamming weight} of ${\bf c}$ is the number of $i$ such that $c_i\neq 0$ in ${\bf c}$ for $0\le i \le n-1$.
    \item[{(ii)}] The {\it symbol-pair weight} of ${\bf c}$ is the number of $i$ such that $(c_i,c_{i+1})\neq (0,0)$ in $\pi({\bf c})$ for $0\le i \le n-1$, where $\pi({\bf c}):=((c_1,c_2),(c_2,c_3),\ldots,(c_{n-1},c_0))$ is the symbol-pair vector of the codeword ${\bf c}$.
    \item[{(iii)}] The {\it Rosenbloom-Tsfasman (RT) weight} of ${\bf c}$ is $1+\max\{i:0\le i \le n-1, c_i\neq 0\}$ in ${\bf c}$ when ${\bf c}\neq {\bf 0}$. If ${\bf c}={\bf 0}$, then the RT weight is equal to $0$.
\end{itemize}

\end{definition}

The minimum weight is the minimum value among the weights for all nonzero codewords in a code $C$.
The {\it minimum Hamming weight}, {\it minimum symbol-pair weight} and {\it minimum RT weight} of a code $C$ are denoted by $wt_H(C)$, $wt_{sp}(C)$ and $wt_{RT}(C)$, respectively.

\section{Third torsional degree for a cyclic code over $\mathfrak R$ of length $n=p^k$}\label{sec3}

In this section, we determine the third torsional degree for a cyclic code over $\mathfrak R$ of length $n=p^k$. 
After that, we present the minimum symbol-pair weight and RT weight for the code.
From this time on, we use the following notations.

\begin{notation}
\[
\begin{array}{l}
\bullet~ \mathfrak R=\mathbb F_{p^m}[u]/\langle u^4 \rangle\\
\bullet~ n=p^k \mbox{,~where $p$ is prime and~} k\ge 1.\\
\bullet~ \binom{s_1}{s_2} \mbox{~the binomial coefficient $s_1$ choose $s_2$}\\
\bullet~ g_0(x)=(x-1)^r + u(x-1)^{k_1}p_1(x) + u^2(x-1)^{k_2}p_2(x) + u^3(x-1)^{k_3}p_3(x)\\
\bullet~ g_1(x)=u(x-1)^{r_1} + u^2(x-1)^{k_4}p_4(x) + u^3(x-1)^{k_5}p_5(x)\\
\bullet~ g_2(x)=u^2(x-1)^{r_2} + u^3(x-1)^{k_6}p_6(x)\\
\bullet~ g_3(x)=u^3(x-1)^{r_3} \\
\mbox{where~} 0\le r_3\le r_2 \le r_1 \le r < n,
k_1<r_1, k_i<r_2, k_j<r_3, \mbox{~and~} p_\ell(x) \mbox{~is a unit}\\
\mbox{or zero in~} \R
~(i=2,4, j=3,5,6 \mbox{~and~}\ell=1,\ldots,6).
\end{array}
\]
\end{notation}

\begin{remark}
Over the ring $\mathfrak R=\mathbb F_{p^m}[u]/\la u^4 \ra$, all different types of cyclic codes of length $n=p^k$ are given as follows:
\begin{equation*}
\begin{array}l
\bullet~ \mbox{Principal ideals}\\
\la g_0(x) \ra, 
\la g_1(x) \ra, 
\la g_2(x) \ra, 
\la g_3(x) \ra.\\
\bullet~ \mbox{Non-principal ideals}\\
\la g_0(x), g_1(x) \ra, 
\la g_0(x), g_2(x) \ra, 
\la g_0(x), g_3(x) \ra,
\la g_1(x), g_2(x) \ra,
\la g_1(x), g_3(x) \ra,\\
\la g_2(x), g_3(x) \ra,
\la g_0(x), g_1(x), g_2(x) \ra,
\la g_0(x), g_1(x), g_3(x) \ra,
\la g_0(x), g_2(x), g_3(x) \ra,
\\
\la g_1(x), g_2(x), g_3(x) \ra,
\la g_0(x), g_1(x), g_2(x), g_3(x) \ra.

\end{array}    
\end{equation*}
\end{remark}

The third torsional degree for a cyclic code $C$ over $\mathcal R$ is defined in Lemma \ref{09_09_2020_lem_2}.
As the other perspective, the third torsional degree for a cyclic code $C$ over $\mathfrak R$ is the smallest non-negative integer $s$ such that $u^3(x-1)^s \in C$.


\begin{theorem}\label{09_07_2020_thm_1}
\begin{itemize}
\item[{\em (i)}]
For a cyclic code $C=\la g_1(x)\ra$ over $\mathfrak R$ of length $n$, the third torsional degree $t_3$ is given as follows:

{\em (a)} If $n-r_1+k_4\ge r_1$, then the third torsional degree $t_3$ is 
\[
t_3=\begin{cases}
\min\{r_1,\tau_1\} & \mbox{if $\tilde h_1(x)$ is a unit},\\
r_1 & \mbox{if $\tilde h_1(x)=0$},
\end{cases}
\]
where $u^3(x-1)^{\tau_1}\tilde h_1(x)=u^3(x-1)^{n-2r_1+2k_4}p_4^2(x)-u^3(x-1)^{n-r_1+k_5}p_5(x)$,
with a unit or zero $\tilde h_1(x)$ in $\R$.

{\em (b)} If $n-r_1+k_4 < r_1$, then the third torsional degree $t_3$ is 
\[
t_3=\begin{cases}
\min\{n-r_1+k_4,\tau_2\} & \mbox{if $\tilde h_2(x)$ is a unit},\\
n-r_1+k_4 & \mbox{if $\tilde h_2(x)=0$},
\end{cases}
\]
where
$u^3(x-1)^{\tau_2}\tilde h_2(x)=u^3(x-1)^{k_4}p_4^2(x)-u^3(x-1)^{r_1-k_4+k_5}p_5(x)$,
with a unit or zero $\tilde h_2(x)$ in $\R$.

\item[{\em (ii)}]
For a cyclic code $C=\langle g_1(x),g_2(x)\rangle$ over $\mathfrak R$ of length $n$, the third torsional degree $t_3$ is obtained as follows:
We consider\\

$u^3(x-1)^{\tau_3} \tilde h_3(x)$ as
{\small
\begin{equation}\label{09_09_2020_eq_5}
\begin{cases}
u^3(x-1)^{n-r_1+k_5}p_5(x)-u^3(x-1)^{n-r_1-r_2+k_4+k_6}p_4(x)p_6(x) & \mbox{if~} n-r_1+k_4 > r_2,\\
u^3(x-1)^{r_2-k_4+k_5}p_5(x)-u^3(x-1)^{k_6}p_4(x)p_6(x) & \mbox{if~} n-r_1+k_4 \le r_2,\\
\end{cases}
\end{equation}}
where $\tilde h_3(x)$ is a unit or zero in $\R$, and $\tau_3\ge 0$.
Set
\begin{equation}\label{09_09_2020_eq_1}
 u^3(x-1)^{\tau_4} \tilde h_4(x)=u^3(x-1)^{k_4}p_4(x)-u^3(x-1)^{r_1-r_2+k_6}p_6(x),  
\end{equation}
where $\tilde h_4(x)$ is a unit or zero in $\R$, and $\tau_4\ge 0$.
Then, for $j=3,4$,
\[
t_3=\begin{cases}
\min \{\kappa, t, r_2, n-r_1+k_4, n-k_4+k_5, n-r_2+k_6\} & \mbox{if $\tilde h_j(x)$ is a unit}\\
&\mbox{for some $j$},\\
\min \{t, r_2, n-r_1+k_4, n-k_4+k_5, n-r_2+k_6\} & \mbox{if $\tilde h_j(x)=0$}\\
&\mbox{for all $j$},
\end{cases}
\]
where $\kappa=\min\{\tau_j: \tilde h_j(x) \mbox{~is a unit in $\R$ for~}j=3,4\}$, and $t$ is the third torsional degree for the code $\la g_1(x)\ra$ obtained in {\em (i)}.

\item[{\em (iii)}]
For a cyclic code $C=\langle g_2(x)\rangle$ over $\mathfrak R$ of length $n$, the third torsional degree $t_3$ is $$t_3=\min\{n-r_2+k_6, r_2\}.$$

\item[{\em (iv)}]
For a cyclic code $C=\la g_3(x) \ra$ over $\mathfrak R$ of length $n$, then the third torsional degree $t_3$ is $r_3$.
\end{itemize}
\end{theorem}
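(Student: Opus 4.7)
The plan is to work directly from the characterization that $t_3$ equals the minimum $(x-1)$-adic order of polynomials $h(x)\in\FR$ such that $u^3 h(x)\in C$; for each ideal $C$ I will establish this minimum by producing matching upper and lower bounds. Parts (iii) and (iv) are immediate from the definitions: $\la u^3(x-1)^{r_3}\ra$ visibly has $t_3=r_3$, and part (iii) reduces to the single-generator version of the analysis below.

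For the upper bounds in (i) and (ii) I will exhibit explicit $u^3$-multiples in $C$ by direct multiplication, freely invoking the identity $(x-1)^n=0$ in $\R$ (which holds since $n=p^k$ and $\mathrm{char}\,\mathbb F_{p^m}=p$). The basic building blocks are: $u^2 g_1(x)=u^3(x-1)^{r_1}$; $u(x-1)^{n-r_1}g_1(x)=u^3(x-1)^{n-r_1+k_4}p_4(x)$ (since the potential $u^2$-term $u^2(x-1)^n$ vanishes); $(x-1)^{n-k_4}g_1(x)=u^3(x-1)^{n-k_4+k_5}p_5(x)$ (using $r_1>k_4$ to kill the $u$-term as well); $u g_2(x)=u^3(x-1)^{r_2}$; and $(x-1)^{n-r_2}g_2(x)=u^3(x-1)^{n-r_2+k_6}p_6(x)$. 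For (ii) I additionally construct two cross-cancellations: depending on whether $n-r_1+k_4>r_2$ or $n-r_1+k_4\le r_2$, one forms $(x-1)^{n-r_1}g_1(x)-p_4(x)(x-1)^{n-r_1+k_4-r_2}g_2(x)$ or $(x-1)^{r_2-k_4}g_1(x)-p_4(x)g_2(x)$, both of which annihilate the $u^2$-parts (in the second, the $u$-part vanishes because $r_1+r_2-k_4\ge n$ in that range) and leave $\pm u^3(x-1)^{\tau_3}\tilde h_3(x)$; the combination $u g_1(x)-(x-1)^{r_1-r_2}g_2(x)$ similarly produces $\pm u^3(x-1)^{\tau_4}\tilde h_4(x)$.

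For the lower bound I take an arbitrary $u^3 h(x)\in C$, write it as $f(x)g_1(x)$ in case (i) or $f_1(x)g_1(x)+f_2(x)g_2(x)$ in case (ii), with each $f_i(x)=\sum_{j=0}^{3}u^j f_i^{(j)}(x)$ and $f_i^{(j)}\in\FR$, and expand to isolate the $u$-, $u^2$-, and $u^3$-coefficients. Requiring the first two to vanish modulo $(x-1)^n$ imposes divisibility constraints on the lifts. In case (i)(a) the $u$-condition forces $f_1^{(0)}=(x-1)^{n-r_1}b_0(x)$, and the $u^2$-condition (using $n-r_1+k_4\ge r_1$) is solved by $f_1^{(1)}=-(x-1)^{n-2r_1+k_4}b_0(x)p_4(x)+(x-1)^{n-r_1}c_1(x)$. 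Substituting into the $u^3$-coefficient and using the definition of $\tau_1$ yields an expression of the form $-b_0(x-1)^{\tau_1}\tilde h_1+(x-1)^{n-r_1+k_4}c_1 p_4+f_1^{(2)}(x-1)^{r_1}$, whose minimum $(x-1)$-order over admissible $b_0,c_1,f_1^{(2)}$ is precisely $\min\{r_1,\tau_1\}$ when $\tilde h_1$ is a unit and $r_1$ when $\tilde h_1=0$ (noting $n-r_1+k_4\ge r_1$). Case (i)(b) runs in parallel: the $u^2$-condition now forces $b_0$ into $(x-1)^{2r_1-n-k_4}$-multiples, and a symmetric substitution produces $\tau_2$ in place of $\tau_1$ with the relevant bound $n-r_1+k_4$ in place of $r_1$.

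The main obstacle will be part (ii), where the $u^2$-vanishing condition becomes the three-term congruence $(x-1)^{n-r_1+k_4}b_0 p_4+f_1^{(1)}(x-1)^{r_1}+f_2^{(0)}(x-1)^{r_2}\equiv 0\pmod{(x-1)^n}$, whose solution space has several independent degrees of freedom. The bookkeeping challenge is to parametrize these directions, verify that each contributes exactly one of the bounds $\kappa,t,r_2,n-r_1+k_4,n-k_4+k_5,n-r_2+k_6$ to the resulting $u^3$-coefficient, and confirm that no further linear combination can produce a strictly smaller $(x-1)$-order. The case split between $n-r_1+k_4>r_2$ and $n-r_1+k_4\le r_2$ in the definition of $\tau_3$ corresponds precisely to whether the leading $u^2$-term of $(x-1)^{n-r_1}g_1(x)$ can be matched by a direct multiple of $g_2(x)$ or whether $g_1$ must first be lifted by $(x-1)^{r_2-k_4}$. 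The hypothesis that each $p_\ell(x)$ is a unit or zero in $\FR$ is used throughout, and the most delicate step is ruling $\kappa$ out of the minimum when both $\tilde h_3$ and $\tilde h_4$ vanish, since in that regime every cross-cancellation collapses onto the bounds already produced by $g_1$ or $g_2$ alone.
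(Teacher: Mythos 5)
Your proposal is correct and follows essentially the same route as the paper: the explicit combinations you form to kill the $u$- and $u^2$-parts ($u^2g_1(x)$, $(x-1)^{n-r_1}g_1(x)$, $(x-1)^{n-k_4}g_1(x)$, $ug_2(x)$, $(x-1)^{n-r_2}g_2(x)$, and the two cross-cancellations between $g_1(x)$ and $g_2(x)$) are precisely the ones the paper uses, and they produce the same candidate degrees $r_1$, $n-r_1+k_4$, $n-k_4+k_5$, $r_2$, $n-r_2+k_6$, $\tau_1$, $\tau_2$, $\tau_3$, $\tau_4$. The only substantive difference is that you make the completeness step explicit by parametrizing all solutions of the $u$- and $u^2$-vanishing congruences for an arbitrary element $f_1(x)g_1(x)+f_2(x)g_2(x)$, whereas the paper simply asserts that the listed polynomials exhaust all elements of the required form; your version supplies the justification the paper omits.
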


\begin{proof}
\begin{itemize}
    \item[{(i)}] For $C=\la g_1(x)\ra$, the following two polynomials are all polynomials which have $u^2$-part generated by $g_1(x)$:
    \begin{equation}\label{09_09_2020_eq_2}
    ug_1(x)=u^2(x-1)^{r_1}+u^3(x-1)^{k_4}p_4(x),
    \end{equation}
    and
    \begin{equation}\label{09_09_2020_eq_3}
    (x-1)^{n-r_1}g_1(x)=u^2(x-1)^{n-r_1+k_4}p_4(x)+u^3(x-1)^{n-r_1+k_5}p_5(x).
    \end{equation}
    If $n-r_1+k_4\ge r_1$, then for deleting $u^2$-part in both polynomials \eqref{09_09_2020_eq_2} and \eqref{09_09_2020_eq_3}, we compute
    \[
    \begin{array}{l}
    (x-1)^{n-2r_1+k_4}p_4(x)(u^2(x-1)^{r_1}+u^3(x-1)^{k_4}p_4(x))-(x-1)^{n-r_1}g_1(x),\\
    =u^3(x-1)^{n-2r_1+2k_4}p_4^2(x)-u^3(x-1)^{n-r_1+k_5}p_5(x),\\
    =u^3(x-1)^{\tau_1}\tilde h_1(x),
    \end{array}\]
    where $\tilde h_1(x)$ is a unit or zero in $\R$, and $\tau_1 \ge 0$.
    
    So, if $\tilde h_1(x)$ is a unit, then $t_3=\min\{\tau_1,r_1\}$; this is because $u^2g_1(x)=u^3(x-1)^{r_1}\in C$, $u(x-1)^{n-r_1}g_1(x)=u^3(x-1)^{n-r_1+k_4}\in C$ and $r_1 \le n-r_1+k_4 $ by the assumption.
    If $\tilde h_1(x)=0$, then $t_3=r_1$.
    
    For $n-r_1+k_4<r_1$, we have the result by the same process as above.
    
    \item[(ii)] For $C=\la g_1(x),g_2(x)\ra$, the all polynomials which have $u^2$-part are \eqref{09_09_2020_eq_2}, \eqref{09_09_2020_eq_3} and 
    \begin{equation}\label{09_09_2020_eq_4}
    g_2(x)=u^2(x-1)^{r_2}+u^3(x-1)^{r_6}p_6(x).    
    \end{equation}
    By the same reasoning as (i), using equations \eqref{09_09_2020_eq_2}, \eqref{09_09_2020_eq_3}, we have the third torsional degree $t$ for $\la g_1(x)\ra$.
    From equations \eqref{09_09_2020_eq_3} and \eqref{09_09_2020_eq_4}, we have the equation \eqref{09_09_2020_eq_5}.
    Moreover, through equations  \eqref{09_09_2020_eq_2} and \eqref{09_09_2020_eq_4}, we obtain the equation \eqref{09_09_2020_eq_1}.
    
    We have polynomials $u^3(x-1)^{r_1}$, $u^3(x-1)^{n-r_1+k_4}$, $u^3(x-1)^{r_2}$, $u^3(x-1)^{n-k_4+k_5}$ and $u^3(x-1)^{n-r_2+k_6}$;
    in detail, we multiply $u$ to \eqref{09_09_2020_eq_2}, \eqref{09_09_2020_eq_3} and \eqref{09_09_2020_eq_4}.
    And we consider multiplying $(x-1)^{n-r_1}$ to \eqref{09_09_2020_eq_2}, $(x-1)^{r_1-k_4}$ to \eqref{09_09_2020_eq_3} and $(x-1)^{n-r_2}$ to \eqref{09_09_2020_eq_4}.
    Then we can get the above five polynomials.
    Moreover, the five polynomials are all polynomials such that $u$-part and $u^2$-part are zeros with nonzero $u^3$-part generated by $g_1(x)$ and $g_2(x)$.
    We note that $r_2\le r_1$.
    
    As a result, we get the third torsional degree for $\la g_1(x), g_2(x) \ra$ as in the statement.
    
    \item[(iii), (iv)] Clearly, the result follows.
\end{itemize}

\end{proof}

We give some examples for Theorem \ref{09_07_2020_thm_1}.

\begin{example}\label{09_10_2020_ex_2}
\begin{itemize}
    \item[(i)] Let $C=\la u(x-1)^6+u^2(x-1)(1+(x-1))+u^3\omega(x-1)^2 \ra$ be a cyclic code over $R=\mathbb F_4[u]/\la u^4 \ra$ of length $n=8$; here, $\mathbb F_4=\{0,1,\omega,\bar \omega\}$.
    Then, by Theorem \ref{09_07_2020_thm_1} (i), we calculate first
    $$u^3(x-1)^{\tau_2}\tilde h_2(x)=u^3(x-1)((1+(x-1))^2-\omega(x-1)^6)$$ 
    since $n-r_1+k_4=3<r_1=6$; hence, $\tau_2=1$ and $\tilde h_2(x)=(1+(x-1))^2-\omega(x-1)^6$ is a unit in $R[x]/\la x^n-1\ra$.
    The third torsional degree $t_3$ of $C$ is $\min\{n-r_1+k_4,\tau_2\}=\{3,1\}=1$. 
    
    \item[(ii)]
    Let $C=\la u(x-1)^6+u^2(x-1)(1+(x-1))+u^3\omega(x-1)^2, u^2(x-1)^4+u^3(1+\omega(x-1)+(x-1)^2) \ra$
    be a cyclic code over $R=\mathbb F_4[u]/\la u^4 \ra$ of length $n=8$.
    We note that
    $(k_4,k_5,k_6)=(1,2,0)$ and $(p_4(x),p_5(x),p_6(x))=(1+(x-1),\omega, 1+\omega(x-1)+(x-1)^2)$.
    By \eqref{09_09_2020_eq_5}, we have
    $$u^3(x-1)^{\tau_3}\tilde h_3(x)=u^3(\omega (x-1)^3-(1+\omega(x-1)+(x-1)^2)(1+(x-1)).$$
    By \eqref{09_09_2020_eq_1}, we get
    $$u^3(x-1)^{\tau_4}\tilde h_4(x)=u^3(x-1)(1+(x-1)-(x-1)(1+\omega(x-1)+(x-1)^2)).$$
    Hence, the third torsional degree $t_3$ of $C$ is 
    $$t_3=\min\{0,1,3,4,9\}=0$$
    since $\kappa=\min\{0,3\}=0$ and $t=1$; $t$ is the third torsional degree of $\la g_0(x)\ra$ obtained in (i).
\end{itemize}

\end{example}


For determining the third torsional degrees for the other cyclic codes over $\mathfrak R$, we use the following lemma.

\begin{lemma}\label{09_03_2020_lem_1}
Let $C$ be a cyclic code over $\mathfrak R$ of length $n$.
An arbitrary polynomial $g(x)$ in $C$ which has  $u^2$-part in $C$ is one of the following forms:
\begin{itemize}

\item[{\em (i)}] If $C=\langle g_0(x)\rangle$, then the polynomial $g(x)$ is an element of the set
 $T_1=\{\eqref{09_03_2020_eq_2},\eqref{09_07_2020_eq_13}, \eqref{09_06_2020_eq_1}\}$.

\item[{\em (ii)}] If $C=\langle g_0(x), g_1(x) \rangle$, then the polynomial $g(x)$ is an element of the set 
 $T_2=\{\eqref{09_03_2020_eq_2}, \eqref{09_03_2020_eq_3}, \eqref{09_02_2020_eq_1},
 \eqref{09_07_2020_eq_12}, \eqref{09_07_2020_eq_13}, \eqref{09_06_2020_eq_2}, \eqref{09_06_2020_eq_1}, \eqref{09_06_2020_eq_3}\}$.

\item[{\em (iii)}] If $C=\langle g_0(x), g_1(x), g_2(x)\rangle$, then the polynomial $g(x)$ is an element of the set
$T_3=\{\eqref{09_03_2020_eq_2},   \eqref{09_03_2020_eq_3}, \eqref{09_02_2020_eq_1},
\eqref{09_07_2020_eq_12},\eqref{09_07_2020_eq_13},  \eqref{09_06_2020_eq_2}, \eqref{09_06_2020_eq_1}, \eqref{09_06_2020_eq_3}, \eqref{09_07_2020_eq_11}\}$.

\item[{\em (iv)}] If $C=\langle g_0(x),g_2(x)\rangle$, then the polynomial $g(x)$ is an element of the set
$T_4=\{\eqref{09_03_2020_eq_2},\eqref{09_07_2020_eq_13}, \eqref{09_06_2020_eq_1}, \eqref{09_07_2020_eq_11} \}$,
\end{itemize}

where

\begin{eqnarray}
u^2(x-1)^{\epsilon_1}h_1(x)+u^3(x-1)^{\epsilon_2}p_3(x)-u^3(x-1)^{\epsilon_3}p_1(x)p_2(x),\label{09_03_2020_eq_2}\\
u^2(x-1)^{\epsilon_4}h_2(x)+u^3(x-1)^{\epsilon_5}p_3(x)-u^3(x-1)^{\epsilon_6}p_1(x)p_5(x),\label{09_03_2020_eq_3}\\
u^2(x-1)^{min\{k_1,r-r_1+k_4\}+\hat e_3}h_3(x) + u^3(x-1)^{k_2}p_2(x)-u^3(x-1)^{r-r_1+k_5}p_5(x),\label{09_02_2020_eq_1}\\
u^2(x-1)^{n-k_1+k_2}p_2(x)+u^3(x-1)^{n-r_1+k_3}p_3(x),\label{09_07_2020_eq_12}\\
u^2(x-1)^{n-r+k_1}p_1(x)+u^3(x-1)^{n-r+k_2}p_2(x),\label{09_07_2020_eq_13}  \\  
u^2(x-1)^{n-r_1+k_4}p_4(x)+u^3(x-1)^{n-r_1+k_5}p_5(x), \label{09_06_2020_eq_2}\\
u^2(x-1)^r+u^3(x-1)^{k_1}p_1(x), \label{09_06_2020_eq_1} \\
u^2(x-1)^{r_1}+u^3(x-1)^{k_4}p_4(x),\label{09_06_2020_eq_3}\\
u^2(x-1)^{r_2}+u^3(x-1)^{k_6}p_6(x),\label{09_07_2020_eq_11}
\end{eqnarray}
with
{\small \[
\begin{array}{l}
(\epsilon_1,\epsilon_2,\epsilon_3)\\
=\begin{cases}
(\min\{n-r+k_2,n-2r+2k_1\}+\hat e_1, n-r+k_3,  & \mbox{if~} n-r+k_1>r,\\
~~~~~~~n-2r+k_1+k_2)&\\
(\min\{k_1, r-k_1+k_2\}+\hat e_1, r-k_1+k_3, k_2) & \mbox{if~} n-r+k_1\le r,
\end{cases}
\end{array}
\]}
{\small \[
\begin{array}{l}
(\epsilon_4,\epsilon_5,\epsilon_6)\\
=\begin{cases}
(\min\{n-r+k_2,n-r+k_1-r_1+k_4\}+\hat e_2,  & \mbox{if~} n-r+k_1>r_1,\\
~~~~~~~n-r+k_3, n-r+k_1-r_1+k_5)&\\
(\min\{k_4,r_1-k_1+k_2\}+\hat e_2, r_1-k_1+k_3, k_5) & \mbox{if~} n-r+k_1\le r_1,
\end{cases}
\end{array}
\]}
{\small \[
\begin{array}{l}
(x-1)^{\epsilon_1}h_1(x)\\
=\begin{cases}
(x-1)^{n-r+k_2}p_2(x)-(x-1)^{n-2r+2k_1}p_1^2(x) & \mbox{if~} n-r+k_1>r,\\
(x-1)^{r-k_1+k_2}p_2(x) -(x-1)^{k_1}p_1^2(x)& \mbox{if~} n-r+k_1\le r,
\end{cases}
\end{array}
\]}
{\small \[
\begin{array}{l}
(x-1)^{\epsilon_4}h_2(x)\\
=\begin{cases}
(x-1)^{n-r+k_2}p_2(x)-(x-1)^{n-r+k_1-r_1+k_4}p_1(x)p_4(x) & \mbox{if~} n-r+k_1>r_1,\\
(x-1)^{r_1-k_1+k_2}p_2(x)-(x-1)^{k_4}p_1(x)p_4(x) & \mbox{if~} n-r+k_1\le r_1,
\end{cases}
\end{array}
\]}
$(x-1)^{min\{k_1,r-r_1+k_4\}+\hat e_3}h_3(x)=(x-1)^{k_1}p_1(x)-(x-1)^{r-r_1+k_4}p_4(x)$,\\
$\hat e_i$ is a non-negative integer, and $h_i(x)$ is a unit or zero in $\FR$ for $i=1,2,3$.

\end{lemma}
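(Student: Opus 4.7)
The plan is to classify, for each of the four codes $C$ in parts (i)--(iv), every polynomial $g(x) \in C$ whose $u^0$- and $u^1$-components vanish while its $u^2$-component is nonzero. Any such $g(x)$ is a combination $\sum_i a_i(x) g_i(x)$ over the generators of $C$. Writing each scalar as $a_i(x) = a_i^{(0)} + u a_i^{(1)} + u^2 a_i^{(2)} + u^3 a_i^{(3)}$ with $a_i^{(j)} \in \FR$, and using $u^4 = 0$ together with $x^n - 1 = (x-1)^n$ in $\FR$, I would expand the product and read off the coefficients of $u^0, u^1, u^2$ separately.

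The vanishing of the $u^0$-coefficient $a_0^{(0)}(x-1)^r$ forces $a_0^{(0)} \in \langle (x-1)^{n-r}\rangle$. The vanishing of the $u^1$-coefficient then yields a second divisibility constraint binding $a_0^{(1)}$ (and, in parts (ii)--(iv), $a_1^{(0)}$) to cancel the induced $u$-part $a_0^{(0)}(x-1)^{k_1}p_1$. Substituting the allowed $a_i^{(j)}$ back into the $u^2$-coefficient and simplifying reproduces exactly the listed forms. The elementary choices give the single-generator shapes: $u^2 g_0$ produces \eqref{09_06_2020_eq_1}; $u(x-1)^{n-r}g_0$, after $(x-1)^n \equiv 0$, gives \eqref{09_07_2020_eq_13}; $ug_1$ yields \eqref{09_06_2020_eq_3}; $g_2$ itself is \eqref{09_07_2020_eq_11}; $(x-1)^{n-r_1}g_1$ is \eqref{09_06_2020_eq_2}; and the reduction of $(x-1)^{n-k_1+k_2}g_0$ yields \eqref{09_07_2020_eq_12}. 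The composite members \eqref{09_03_2020_eq_2}, \eqref{09_03_2020_eq_3}, \eqref{09_02_2020_eq_1} then arise by taking $a_0^{(0)}$ nontrivial and explicitly cancelling the induced $u^1$-term; the case splits in the formulas for $(\epsilon_1,\epsilon_2,\epsilon_3)$ and $(\epsilon_4,\epsilon_5,\epsilon_6)$ -- namely $n-r+k_1 > r$ versus $\le r$, and $n-r+k_1 > r_1$ versus $\le r_1$ -- record whether this cancellation can be performed by a single monomial multiplier or requires a further reduction modulo $(x-1)^n$, and the stated $h_1, h_2, h_3$ are precisely what remains after that reduction.

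The main obstacle is the combinatorial bookkeeping: for each of the four codes one must enumerate all legal choices of the $a_i^{(j)}$ and verify that the resulting $u^2$-leading polynomial lies in the listed set $T_i$. The difficulty is not in any individual case but in ensuring that no hybrid of the available generators escapes the list; this is controlled by the observation that any residual $u^3$-tail can be absorbed into terms of the form $u^3 (x-1)^j$ provided by the remaining generators, so it suffices to match the $u^2$-leading coefficient. Finally, one checks that the displayed $h_i(x)$ are units or zero in $\FR$ by noting that each is either identically zero or else a difference of a unit and a strictly higher power of $(x-1)$ times a unit, hence itself a unit.
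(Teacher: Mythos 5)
Your proposal follows essentially the same route as the paper's proof: write a general element of $C$ as $\sum_i a_i(x)g_i(x)$ with the coefficients decomposed $u$-adically, force the $u^0$- and $u^1$-components to vanish (which pins $a_0^{(0)}$ to $\langle (x-1)^{n-r}\rangle$ and ties $a_0^{(1)}$, $a_1^{(0)}$ to the induced $u$-part), and enumerate the surviving $u^2$-leading forms with the same case split on $n-r+k_1$ versus $r$ and $r_1$. One small slip: the multiplier producing \eqref{09_07_2020_eq_12} is $(x-1)^{n-k_1}$ rather than $(x-1)^{n-k_1+k_2}$, since $g_0(x)$ already carries the factor $(x-1)^{k_2}$ in its $u^2$-part.
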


\begin{proof}
Suppose that $C=\langle g_0(x), g_1(x)\rangle$.
Then any element in $C$ is written as
\begin{equation}\label{09_07_2020_eq_1}
\begin{array}{l}
g_0(x)F_1(x)+g_1(x)F_2(x)\\
=((x-1)^r+u(x-1)^{k_1}p_1(x)+u^2(x-1)^{k_2}p_2(x)+u^3(x-1)^{k_3}p_3(x))F_1(x)\\
+(u(x-1)^{r_1}+u^2(x-1)^{k_4}p_4(x)+u^3(x-1)^{k_5}p_5(x))F_2(x),\\
\end{array}
\end{equation}

where $F_1(x)=(f_{00}(x)+uf_{01}(x)+u^2f_{02}(x)+u^3f_{03}(x))$, $F_2(x)=(f_{10}(x)+uf_{11}(x)+u^2f_{12}(x))$ and $f_{ij}(x)\in \FR$ for $i=0,1$ and $j=0,1,2,3$.
If an element $g(x)$ in $C$ is divisible by $u$, then $f_{00}(x)$ must be equal to $(x-1)^{n-r}\tilde f_{00}(x)$ in \eqref{09_07_2020_eq_1} for some $\tilde f_{00}(x) \in \FR$.

Then the polynomial $g(x)$ which has nonzero $u$-part in $C$ is one of the following form:
{\small
\begin{eqnarray}
(u(x-1)^{n-r+k_1}p_1(x)+u^2(x-1)^{n-r+k_2}p_2(x)+u^3(x-1)^{n-r+k_3}p_3(x))\tilde f_{00}(x),\label{09_07_2020_eq_2}\\
(u(x-1)^{r}+u^2(x-1)^{k_1}p_1(x)+u^3(x-1)^{k_2}p_2(x))f_{01}(x),\label{09_07_2020_eq_3}\\
(u(x-1)^{r_1}+u^2(x-1)^{k_4}p_4(x)+u^3(x-1)^{k_5}p_5(x))f_{10}(x)\label{09_07_2020_eq_4}
\end{eqnarray}}
from \eqref{09_07_2020_eq_1}.
We need to annihilate $u$-part in each polynomials \eqref{09_07_2020_eq_2}, \eqref{09_07_2020_eq_3} and \eqref{09_07_2020_eq_4}.
First, we calculate the following using \eqref{09_07_2020_eq_3} and \eqref{09_07_2020_eq_4} since $r>r_1$:
\[
\begin{array}{l}
u(x-1)^{r}+u^2(x-1)^{k_1}p_1(x)+u^3(x-1)^{k_2}p_2(x)\\
-(x-1)^{r-r_1}(u(x-1)^{r_1}+u^2(x-1)^{k_4}p_4(x)+u^3(x-1)^{k_5}p_5(x))\\
=u^2(x-1)^{k_1}p_1(x)-u^2(x-1)^{r-r_1+k_4}p_4(x)\\
+u^3(x-1)^{k_2}p_2(x)-u^3(x-1)^{r-r_1+k_5}p_5(x),\\
=u^2(x-1)^{\min\{k_1, r-r_1+k_4\}+\hat e_3}h_3(x)
+u^3(x-1)^{k_2}p_2(x)-u^3(x-1)^{r-r_1+k_5}p_5(x),
\end{array}
\]
where $(x-1)^{min\{k_1,r-r_1+k_4\}+\hat e_3}h_3(x)=(x-1)^{k_1}p_1(x)-(x-1)^{r-r_1+k_4}p_4(x)$;
this calculating result is equation \eqref{09_02_2020_eq_1}.

\begin{itemize}
\item[(a)] Assume $n-r+k_1>r_1$.
For the same reasoning as above, we have the followings using \eqref{09_07_2020_eq_2} and \eqref{09_07_2020_eq_4}: 
{\small \[
\begin{array}{l}
u(x-1)^{n-r+k_1}p_1(x)+u^2(x-1)^{n-r+k_2}p_2(x)+u^3(x-1)^{n-r+k_3}p_3(x)\\
-(x-1)^{n-r+k_1-r_1}p_1(x)(u(x-1)^{r_1}+u^2(x-1)^{k_4}p_4(x)+u^3(x-1)^{k_5}p_5(x))\\
=u^2(x-1)^{n-r+k_2}p_2(x)-u^2(x-1)^{n-r+k_1-r_1+k_4}p_1(x)p_4(x)\\
+u^3(x-1)^{n-r+k_3}p_3(x)-u^3(x-1)^{n-r+k_1-r_1+k_5}p_1(x)p_5(x),\\
=u^2(x-1)^{\min\{n-r+k_2,n-r+k_1-r_1+k_4\}+\hat e_2}h_2(x)\\
+u^3(x-1)^{n-r+k_3}p_3(x)-u^3(x-1)^{n-r+k_1-r_1+k_5}p_1(x)p_5(x),
\end{array}
\]}
where $h_2(x)$ is a unit or zero in $\FR$ such that $(x-1)^{n-r+k_2}p_2(x)-(x-1)^{n-r+k_1-r_1+k_4}p_1(x)p_4(x)=(x-1)^{\min\{n-r+k_2,n-r+k_1-r_1+k_4\}+\hat e_2}h_2(x)$.

\item[(b)] Assume $n-r+k_1\le r_1$.
We get the followings using \eqref{09_07_2020_eq_2} and \eqref{09_07_2020_eq_4}: 
{\small \[
\begin{array}{l}
(x-1)^{r_1-n+r-k_1}(u(x-1)^{n-r+k_1}p_1(x)+u^2(x-1)^{n-r+k_2}p_2(x)\\
+u^3(x-1)^{n-r+k_3}p_3(x))-p_1(x)(u(x-1)^{r_1}+u^2(x-1)^{k_4}p_4(x)+u^3(x-1)^{k_5}p_5(x))\\
=u^2(x-1)^{r_1-k_1+k_2}p_2(x)-u^2(x-1)^{k_4}p_1(x)p_4(x)
+u^3(x-1)^{r_1-k_1+k_3}p_3(x)\\
-u^3(x-1)^{k_5}p_1(x)p_5(x),\\
=u^2(x-1)^{\min\{r_1-k_1+k_2,k_4\}+\hat e_2}h_2(x)
+u^3(x-1)^{r_1-k_1+k_3}p_3(x)\\
-u^3(x-1)^{k_5}p_1(x)p_5(x),
\end{array}
\]}
where $h_2(x)$ is a unit or zero in $\FR$ such that $(x-1)^{r_1-k_1+k_2}p_2(x)-(x-1)^{k_4}p_1(x)p_4(x)=(x-1)^{\min\{r_1-k_1+k_2,k_4\}+\hat e_2}h_2(x)$.
\end{itemize}

From (a) and (b), we obtain the equation \eqref{09_03_2020_eq_2}.

By the same process as (a) and (b), according to $n-r+k_1>r$ or $n-r+k_1\le r$, we obtain the equation \eqref{09_03_2020_eq_3} using equations  \eqref{09_07_2020_eq_2} and \eqref{09_07_2020_eq_3}.

Moreover, from equation \eqref{09_07_2020_eq_1}, we obtain all polynomials such that $u$-part is zero with nonzero $u^2$-part:
\[
\begin{array}{l}
\mbox{equation~} \eqref{09_07_2020_eq_12}=g_0(x)(x-1)^{n-k_1}\hat f_{00}(x),\\
\mbox{equation~}\eqref{09_07_2020_eq_13}=g_0(x)(x-1)^{n-r}(u\tilde f_{01}(x)),\\
\mbox{equation~}\eqref{09_06_2020_eq_2}=g_1(x)(x-1)^{n-r_1}\tilde f_{10}(x),\\
\mbox{equation~}\eqref{09_06_2020_eq_1}=g_0(x)(u^2f_{02}(x)),\\
\mbox{equation~}\eqref{09_06_2020_eq_3}=g_1(x)(u f_{11}(x)),
\end{array}
\]
where $f_{00}(x)=(x-1)^{n-r}\tilde f_{00}(x)=(x-1)^{n-r}(x-1)^{r-k_1}\hat f_{00}(x)=(x-1)^{n-k_1}\hat f_{00}(x)$,
$f_{01}(x)=(x-1)^{n-r}\tilde f_{01}(x)$
and
$f_{10}(x)=(x-1)^{n-r_1}\tilde f_{10}(x)$ 
for some polynomials $\tilde f_{00}
(x)$, $\hat f_{00}(x)$, $\tilde f_{01}(x)$ and $\tilde f_{10}(x)$ in $\R$.
Hence, for $C=\langle g_0(x), g_1(x) \rangle$, any polynomial which has nonzero $u^2$-part in $C$ is an element of the set $T_2$.

The other parts can be proved similar way as above.
\end{proof}

We note that every polynomial in Lemma \ref{09_03_2020_lem_1} has the following form:
\begin{equation}\label{09_02_2020_eq_6}
u^2(x-1)^{k}h_1(x)+u^3(x-1)^{\tilde k}h_2(x),
\end{equation}
where $k$ and $\tilde k$ are non-negative integers, and $h_i(x)$ is a unit or zero in $\FR$ for $i=1,2$.
In \eqref{09_02_2020_eq_6}, the degree $k$ is (resp. $\tilde k$) is called a {\it $u^2$-degree} (resp. {\it $u^3$-degree}) when $h_1(x)\neq 0$ (resp. $h_2(x)\neq 0$).

\bigskip
In Theorem \ref{09_07_2020_thm_2}, we determine the third torsional degree for cyclic codes $\la g_0(x)\ra$, $\la g_0(x),g_1(x)\ra$, $\la g_0(x), g_2(x) \ra$ and $\la g_0(x), g_1(x), g_2(x) \ra$ in $\R$.

\begin{theorem}\label{09_07_2020_thm_2}
Let C be a cyclic code over $\mathfrak R$ of length $n$, and $T$ be the set defined as in Lemma \ref{09_03_2020_lem_1}:
\[
T=\begin{cases}
T_1 & \mbox{if~} C=\la g_0(x) \ra,\\
T_2 & \mbox{if~} C=\la g_0(x), g_1(x)\ra,\\
T_3 & \mbox{if~} C=\la g_0(x), g_1(x), g_2(x)\ra,\\
T_4 & \mbox{if~} C=\la g_0(x), g_2(x)\ra.\
\end{cases}
\]
Let $\nu$ be the number of polynomials in $T_i$ $(1 \le i \le 4)$ which has a non-negative $u^2$-degree for each cases.
Let $f_i=u^2(x-1)^{\omega_i}h_{i1}(x)+u^3(x-1)^{\tilde \omega_i}h_{i2}(x)$ be an element of the set $T$, where $\omega_i$ (resp. $\tilde \omega_i$) is the $u^2$-degree (resp. $u^3$-degree) and $h_{ij}(x)$ is a unit or zero in $\FR$ $(1\le i \le |T|, j=1,2)$.
Then the third torsional degree $t_3$ of the code $C$ is obtained as follows:
\begin{itemize}
\item[{\em (i)}] If $\nu=0$, then the third torsional degree $t_3=\min\{\tilde \omega_i : 1 \le i \le |T|\}$.

\item[{\em (ii)}] If $\nu=1$, we suppose that $f_1(x)$ has a non-negative $u^2$-degree.
Then the third torsional degree $t_3$ is 
$$t_3=\min\{n-\omega_1+\tilde \omega_1, \omega_1, \tilde \omega_2,\ldots,\tilde \omega_{|T|}\}.$$

\item[{\em (iii)}] Suppose that $\nu\ge 2$. Set $f_1(x)$, \ldots, $f_{\nu}(x)$ have non-negative $u^2$-degrees with $\omega_1\le \ldots \le \omega_\nu$.
We have that 
\begin{equation}\label{09_10_2020_eq_1}
f_j-(x-1)^{\omega_j-\omega_i}f_i(x)h_{i1}^{-1}(x)h_{j1}(x)=u^3(x-1)^{\tau_k}\tilde h_k(x),
\end{equation}
where $\tilde h_k(x)$ is a unit or zero in $\R$ $(1\le i<j \le \nu, 1\le k \le \binom{\nu}{2})$.
Let $M=\{\tau_k: \tilde h_k(x)\neq 0\}$, and $\mathfrak m$ be the minimum value in $M$.
Then the third torsional degree $t_3$ is 
{\small \[
t_3=\begin{cases}
\min\{\mathfrak m, \omega_1, \ldots,\omega_\nu, n-\omega_1+\tilde \omega_1,\ldots,n-\omega_\nu+\tilde \omega_\nu,
 \tilde \omega_{\nu+1},\ldots, \tilde \omega_{|T|}\} & \mbox{if~}M\neq \emptyset,\\
\min\{\omega_1, \ldots,\omega_\nu, n-\omega_1+\tilde \omega_1,\ldots,n-\omega_\nu+\tilde \omega_\nu,
 \tilde \omega_{\nu+1},\ldots, \tilde \omega_{|T|}\} & \mbox{if~} M = \emptyset.
\end{cases}
\]}

\end{itemize}
\end{theorem}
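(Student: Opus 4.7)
As noted just before the theorem, $t_3$ is the smallest nonnegative integer $s$ with $u^3(x-1)^s\in C$. The plan is to enumerate every way of producing an element of the form $u^3(x-1)^s\cdot v(x)$ in $C$ with $v(x)$ a unit in $\R$, and then take the minimum over the resulting exponents. By Lemma \ref{09_03_2020_lem_1}, every polynomial of $C$ with nonzero $u^2$-part lies in the set $T$, and each such polynomial has the form $u^2(x-1)^{\omega_i}h_{i1}(x)+u^3(x-1)^{\tilde \omega_i}h_{i2}(x)$. I would order $T$ so that $f_1,\dots,f_\nu$ are the elements with $h_{i1}$ a unit and $f_{\nu+1},\dots,f_{|T|}$ are the pure-$u^3$ elements (those with $h_{i1}=0$), reducing the problem to finding the minimum $u^3$-degree achievable from this inventory.

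Three basic operations will generate all candidates. (a) Each $f_i$ with $i>\nu$ already has the form $u^3(x-1)^{\tilde \omega_i}h_{i2}(x)$, so multiplying by $h_{i2}^{-1}$ gives $u^3(x-1)^{\tilde \omega_i}\in C$ whenever $h_{i2}$ is a unit. (b) For $i\le\nu$, multiplying $f_i$ by $u$ annihilates the $u^3$-part (since $u^4=0$) and produces $u^3(x-1)^{\omega_i}h_{i1}(x)$, which yields $u^3(x-1)^{\omega_i}\in C$; multiplying $f_i$ by $(x-1)^{n-\omega_i}$ annihilates the $u^2$-part (since $(x-1)^n=0$ in $\R$ because $n=p^k$ in characteristic $p$) and produces $u^3(x-1)^{n-\omega_i+\tilde \omega_i}h_{i2}(x)$, yielding $u^3(x-1)^{n-\omega_i+\tilde \omega_i}\in C$ whenever $h_{i2}$ is a unit. (c) For $\nu\ge 2$ and any pair $1\le i<j\le\nu$, the combination in \eqref{09_10_2020_eq_1} is constructed precisely to cancel the $u^2$-part and produces $u^3(x-1)^{\tau_k}\tilde h_k(x)$; if $\tilde h_k$ is a unit, this contributes $\tau_k$ as a candidate. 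The three cases of the theorem are simply different aggregations of these contributions: case (i) has only the $\tilde \omega_i$ from (a); case (ii) adds the two candidates $\omega_1$ and $n-\omega_1+\tilde \omega_1$ coming from (b) applied to $f_1$; case (iii) further includes the pairwise exponents $\tau_k$ from (c), which are present in the minimum only when at least one $\tilde h_k$ is a unit, matching the condition $M\ne\emptyset$.

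The main obstacle is the completeness direction: showing that no smaller exponent can arise from any other $\R$-linear combination in $C$. My intended argument is that any element of $C$ is an $\R$-linear combination of the generators, and that any pure-$u^3$ output must come from an $\R$-combination of elements of $T$ whose $u^2$-parts cancel, for any summand with nonzero $u$-part would contribute a nonzero $u^2$-part to the sum and thus return us to an element of $T$. A cancellation of $u^2$-parts involving three or more elements of $T$ decomposes, by elementary reductions of the relevant linear relations, into a sum of pairwise cancellations of the form \eqref{09_10_2020_eq_1}. Since the $(x-1)$-adic valuation of a sum is always at least the minimum of the valuations of its summands, such a multi-way combination cannot produce a $u^3$-degree smaller than some $\tau_k$, $\omega_i$, $n-\omega_i+\tilde \omega_i$, or $\tilde \omega_i$ already enumerated. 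Turning this sketch into a rigorous minimality argument, while carefully handling the cases where some $h_{i2}$ or $\tilde h_k$ vanishes, is the technical heart of the proof.
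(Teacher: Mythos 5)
Your proposal matches the paper's own proof: the paper establishes (i)--(iii) by exactly your three operations --- reading off the $u^3$-degrees $\tilde\omega_i$ of the pure-$u^3$ elements of $T$, multiplying each $f_i$ with unit $u^2$-part by $u$ and by $(x-1)^{n-\omega_i}$ to obtain $u^3(x-1)^{\omega_i}$ and $u^3(x-1)^{n-\omega_i+\tilde\omega_i}$, and forming the pairwise cancellations \eqref{09_10_2020_eq_1} to obtain the $\tau_k$ --- and then taking the minimum. Your closing discussion of completeness (why no other $\mathfrak R$-linear combination can achieve a smaller exponent) is actually more than the paper offers, which simply asserts ``Hence we get the result'' at that point.
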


\begin{proof}
\begin{itemize}
    \item[{(i)}] Suppose that $\nu=0$.
    It means that every polynomial in the set $T$ has zero $u^2$-degree.
    Hence we only consider $u^3$-degrees $\tilde \omega_i$ $(1\le i \le |T|)$ for all polynomials in $T$.
    Easily, the third torsional degree for the code is
    $$\min\{\tilde \omega_i : 1 \le i \le |T|\}.$$
    \item[{(ii)}] Suppose that $\nu=1$.
    Then for removing $u^2(x-1)^{\omega_1}\tilde h_{11}(x)$ for $f_1(x)$, we multiply $(x-1)^{n-\omega_1}$ to $f_1(x)$.
    We obtain that $(x-1)^{n-\omega_1}f_1(x)=u^3(x-1)^{n-\omega_1+\tilde \omega_1}h_{12}(x)$.
    Furthermore, $uf_1(x)=u^3(x-1)^{\omega_1}h_{11}(x)$ is also an element in $C$.
    These imply the reulsts.
    \item[{(iii)}] We assume that $\nu\ge 2$.
    Then for the polynomials $f_1,\ldots,f_{\nu}$, we obtain the same result as the previous case (ii).
    On the other way, for deleting $u^2$-parts of $f_i(x)$ and $f_j(x)$ each other, we compute
    $$f_j(x)-(x-1)^{\omega_j-\omega_i}f_i(x)h_{i1}^{-1}(x)h_{j1}(x)=u^3(x-1)^{\tau_k}\tilde h_k(x),$$
    where $1 \le i < j \le \nu$ and $1 \le k \le \binom{\nu}{2}$.
    Through this process, we have that $u^3$-degree $\tau_k$ only when $\tilde h_k(x)\neq 0$.
    Hence we get the result.
\end{itemize}
The items (i), (ii) and (iii) therefore implies the third torsional degree $t_3$ for the code $C$ over $\mathfrak R$.
\end{proof}

Finally, in Theorem \ref{09_09_2020_thm_1}, we get the third torsional degree of a cyclic code which is a non-principal ideal containing $g_3(x)=u^3(x-1)^{r_3}$ using Theorems \ref{09_07_2020_thm_1} and \ref{09_07_2020_thm_2}.

\begin{theorem}\label{09_09_2020_thm_1}
Let $\hat C$ be a cyclic code given in Theorems \ref{09_07_2020_thm_1} and \ref{09_07_2020_thm_2} except for the case $\hat C=\la g_3(x)\ra$.
Let $\mathfrak t$ be the third torsional degree for a cyclic code $\hat C$.
Let $C$ be a cyclic code over $\mathfrak R$ of length $n$ generated by $g_3(x)$ and all generator polynomials in $\hat C$.
Then the third torsional degree $t_3$ for the code $C$ is obtained as follows:
\begin{itemize}
    \item[{\em (i)}] If $r_3\ge \mathfrak t$, then the third torsional degree $t_3$ of $C$ is equal to $\mathfrak t$.
    \item[{\em (ii)}] If $r_3<\mathfrak t$, then the third torsional degree $t_3$ of $C$ is equal to $r_3$.
\end{itemize}
\end{theorem}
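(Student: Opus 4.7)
My plan is to observe that $C = \hat C + \langle g_3(x)\rangle$ as ideals in $\R$, and to compare the two candidates $\mathfrak t$ and $r_3$ for the third torsional degree of $C$. Recall the characterisation stated just before Theorem \ref{09_07_2020_thm_1}: the third torsional degree of a cyclic code is the smallest non-negative integer $s$ such that $u^3(x-1)^s$ lies in the code. Since $u^3(x-1)^{\mathfrak t}\in \hat C\subseteq C$ and $u^3(x-1)^{r_3}=g_3(x)\in C$, I immediately obtain the upper bound $t_3\le \min\{\mathfrak t,r_3\}$.

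For case (i), when $r_3\ge\mathfrak t$, I would argue that $g_3(x)$ is redundant. Indeed, because $u^3(x-1)^{\mathfrak t}\in\hat C$, multiplying by $(x-1)^{r_3-\mathfrak t}\in \FR$ produces $g_3(x)=u^3(x-1)^{r_3}\in\hat C$; hence $C=\hat C$, and therefore $t_3=\mathfrak t$.

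For case (ii), when $r_3<\mathfrak t$, the upper bound $t_3\le r_3$ is already in hand, so the task is to rule out any smaller $u^3$-generator. Any element of $C$ may be written as $\hat c+h(x)g_3(x)$ with $\hat c\in\hat C$ and $h(x)\in\R$. Because $h(x)g_3(x)$ is automatically $u^3$-only (using $u^4=0$), if the total element is $u^3$-only then so is $\hat c$, and hence $\hat c\in u^3\cdot Tor_3(\hat C)=u^3\langle(x-1)^{\mathfrak t}\rangle$. Writing $h(x)g_3(x)=u^3(x-1)^{r_3}h_0(x)$ for the mod-$u$ reduction $h_0(x)$ of $h(x)$, the $u^3$-only part of $C$ lies inside $u^3\bigl(\langle(x-1)^{\mathfrak t}\rangle+\langle(x-1)^{r_3}\rangle\bigr)=u^3\langle(x-1)^{r_3}\rangle$, where the final equality uses that in the local ring $\FR\cong \mathbb F_{p^m}[x]/\langle(x-1)^{p^k}\rangle$ the sum $\langle(x-1)^a\rangle+\langle(x-1)^b\rangle$ equals $\langle(x-1)^{\min\{a,b\}}\rangle$. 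Hence $t_3=r_3$.

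The two cases together give the theorem. The only mild obstacle I anticipate is the bookkeeping in case (ii), namely verifying that every $u^3$-only element of $C$ really decomposes as claimed; but this follows immediately from the ideal-sum description $C=\hat C+\langle g_3(x)\rangle$ combined with $u^4=0$, so the argument should be short once the characterisation via $u^3(x-1)^s$ is invoked.
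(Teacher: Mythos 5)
Your proposal is correct and follows essentially the same route as the paper: case (i) is verbatim the paper's argument (multiply $u^3(x-1)^{\mathfrak t}\in\hat C$ by $(x-1)^{r_3-\mathfrak t}$ to see $g_3(x)$ is redundant, so $C=\hat C$), and case (ii) reaches the same conclusion the paper simply asserts. Your ideal-sum decomposition of the $u^3$-only part of $C$ as $u^3\bigl(\langle(x-1)^{\mathfrak t}\rangle+\langle(x-1)^{r_3}\rangle\bigr)=u^3\langle(x-1)^{r_3}\rangle$ in the chain ring $\FR$ is a correct and welcome filling-in of the detail the paper leaves implicit.
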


\begin{proof}
If $r_3\ge \mathfrak t$, then $C=\hat C$, 
that is $g_3(x)\in \hat C$ 
since $u^3(x-1)^{\mathfrak t}\in \hat C$ and $u^3(x-1)^{\mathfrak t}(x-1)^{r_3-\mathfrak t}=u^3(x-1)^{r_3}$.
Hence, in this case, $t_3=\mathfrak t$.

If $r_2<\mathfrak t$, then $r_3$ is the smallest non-negative integer such that $u^3(x-1)^{r_2}\in C$.
Thus, it follows the result.
\end{proof}

The next example gives the third torsional degree for certain cyclic codes by using Theorem \ref{09_07_2020_thm_2}.

\begin{example}\label{09_10_2020_ex_3}
\begin{itemize}
    \item[(i)] Let $C=\la (x-1)^3+u(x-1)+u^2 + u^3(1+(x-1)), u(x-1)^2+u^2+u^3(x-1)\ra$ be a cyclic code over $\Bbb F_2[u]/\la u^4 \ra$ of length $4$.
    We note that $(k_1,k_2,k_3,k_4,k_5)=(1,0,0,0,1)$ and $(p_1(x),p_2(x),p_3(x),p_4(x),p_5(x))=(1,1,1+(x-1),1,1)$.
    We have the followings
    \[
    \begin{array}{l}
    \bullet~ (x-1)^{\epsilon_1}h_1(x)=(x-1)(1+(x-1)),\\
    \bullet~ (x-1)^{\epsilon_4}h_2(x)=-1+(x-1),\\
    \bullet~ (x-1)^{\min\{k_1,r-r_1+k_4\}+\hat e_3}h_3(x)=0,\\
    \bullet~ (\epsilon_2,\epsilon_3)=(r-k_1+k_3,k_2)=(2,0),\\
    \bullet~ (\epsilon_5,\epsilon_6)=(r_1-k_1+k_3,k_5)=(1,1),
    
    \end{array}
    \]
    since $n-r+k_1=2<r$ and $n-r+k_1\le r_1$.
    Thus, by Lemma \ref{09_03_2020_lem_1}, the set $T_3$ consists of the following polynomials:
    \[
    \begin{array}{l}
    u^2(x-1)(1+(x-1))+u^3(x-1)^2(1+(x-1))-u^3,\\
    u^2(-1+(x-1))+u^3(x-1)(1+(x-1))-u^3(x-1),\\
    u^3-u^3(x-1)^2,\\
    u^2(x-1)^3+u^3(x-1)^2(1+(x-1)),\\
    u^2(x-1)^2+u^3(x-1),\\
    u^2(x-1)^2+u^3(x-1)^3,\\
    u^2(x-1)^3+u^3(x-1),\\
    u^2(x-1)^2+u^3.
    \end{array}
    \]
    In Theorem \ref{09_07_2020_thm_2}, we get $t_3=0$ since there is the polynomial $u^3-u^3(x-1)^2=u^3(1-(x-1)^2)$.
    It means that $u^3\in C$.
    
    \item[(ii)] Let $C=\la (x-1)^5+u(x-1)^2(1+2(x-1))+u^3(x-1)(3+(x-1))\ra$ be a cyclic code over $\mathbb F_3[u]/\la u^4 \ra$ of length $9$.
    We note that $(k_1,k_2,k_3)=(2,0,1)$ and $(p_1(x),p_2(x),p_3(x))=(1+2(x-1),0,3+(x-1))$.
    We have that
    \[
    \begin{array}{l}
    \bullet~ (x-1)^{\epsilon_1}h_1(x)=0,\\
    \bullet~ (\epsilon_2,\epsilon_3)=(n-r+k_3=5, n-2r+k_1+k_2=1),\\
    \end{array}\]
    since $n-r+k_1=6>r$.
    Then the set $T_1$ consists of 
    \[
    \begin{array}{l}
    u^3(x-1)^5(1+2(x-1))(3+(x-1)),\\
    u^2(x-1)^6(1+2(x-1)),\\
    u^2(x-1)^5+u^3(x-1)^2(1+2(x-1)).
    \end{array}\]
    Hence $\nu=2$ here.
    Let $f_1(x)=u^2(x-1)^6(1+2(x-1))$ and $f_2(x)=u^2(x-1)^5+u^3(x-1)^2(1+2(x-1))$.
    Then $f_1(x)-(x-1)(1+2(x-1))f_2(x)=-u^3(x-1)^3(1+2(x-1))^2$, that is, $\tau_1=3$ in equation \eqref{09_10_2020_eq_1}.
    Thus, the third torsional degree $t_3$ of $C$ is $\min\{3,5,6\}=3$ by Theorem \ref{09_07_2020_thm_2}.
\end{itemize}
\end{example}

By \cite[Theorems 18 and 22]{SS}, the minimum symbol-pair weight and RT-weight for a cyclic code $C$ over $\mathfrak R$ can be given by the third torsional degree $t_3$ of $C$.

\begin{theorem}\label{09_10_2020_cor_1}
Let $C$ be a cyclic code over $\mathfrak R=\mathbb F_{p^m}[u]/\la u^4 \ra$ of length $n=p^k$.
Let $Tor_3(C)=\la (x-1)^{t_3}\ra$ be the third torsion code of $C$ over $\Bbb F_{p^m}$ with the third torsional degree $t_3$ 
(the third torsional degree $t_3$ is obtained in Theorems \ref{09_07_2020_thm_1}, \ref{09_07_2020_thm_2} and \ref{09_09_2020_thm_1}).
\begin{itemize}
    \item[{\em (i)}] The minimum symbol-pair weight $wt_{sp}(C)$ of $C$ is 
    \[
    \begin{cases}
    2 & \mbox{if~} t_3=0,\\
    3p^\ell & \mbox{if~} t_3=p^k-p^{k-\ell}+1, \mbox{~where~}0\le \ell\le k-2,\\
    4p^\ell & \mbox{if~} p^k-p^{k-\ell}+2\le t_3\le p^k-p^{k-\ell}+p^{k-\ell-1},\\
    &\mbox{where~}0\le \ell\le k-2,\\
    2(\mu+2)p^\ell & \mbox{if~} p^k-p^{k-\ell}+\mu p^{k-\ell-1}+1 \le t_3\le p^k-p^{k-\ell}+(\mu+1)p^{k-\ell-1},\\
    &\mbox{where~}0\le \ell\le k-2 \mbox{~and~} 1\le \mu \le p-2,\\
    (\mu+2)p^{k-1} & \mbox{if~} t_3=p^k-p+\mu, \mbox{~where~} 1\le \mu \le p-2,\\
    p^k & \mbox{if~} t_3=p^k-1,\\
    0 & \mbox{if~} t_3=p^k.
    \end{cases}
    \]
    \item[{\em (ii)}] The RT weight $wt_{RT}(C)$ of $C$ is $t_3+1$.
\end{itemize}
\end{theorem}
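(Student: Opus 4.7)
The plan is to reduce the computation to the deepest torsion code and then cite the known weight formulas for repeated-root cyclic codes over a finite field. By the results of Sharma and Sidana \cite[Theorems 18, 22]{SS}, for any cyclic code $C$ over a finite commutative chain ring of nilpotency index $e$, both the minimum symbol-pair weight and the minimum RT weight of $C$ coincide with the corresponding minimum weights of its deepest nontrivial torsion code $Tor_{e-1}(C)$. In our setting $\mathfrak R=\mathbb F_{p^m}[u]/\langle u^4\rangle$ has $e=4$, so the two metrics depend only on $Tor_3(C)=\langle(x-1)^{t_3}\rangle\subseteq \mathbb F_{p^m}[x]/\langle x^{p^k}-1\rangle$, a repeated-root cyclic code over the field $\mathbb F_{p^m}$. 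Since Theorems \ref{09_07_2020_thm_1}, \ref{09_07_2020_thm_2} and \ref{09_09_2020_thm_1} already provide $t_3$ explicitly, the remaining task is to read off $wt_{sp}$ and $wt_{RT}$ of a single-generator repeated-root cyclic code over $\mathbb F_{p^m}$.

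For part (ii), I would argue directly. Any nonzero codeword of $Tor_3(C)$ is of the form $f(x)(x-1)^{t_3}$ in $\mathbb F_{p^m}[x]/\langle x^{p^k}-1\rangle$ with $\deg f\le p^k-1-t_3$. Because $(x-1)^{p^k}=x^{p^k}-1$ in characteristic $p$, the polynomial $(x-1)^{t_3}$ itself has $x$-degree exactly $t_3$ with a nonzero leading coefficient (namely $1$), so its RT weight is $t_3+1$. Any other multiple has $x$-degree at least $t_3$, yielding $wt_{RT}(Tor_3(C))\ge t_3+1$; the case $t_3=p^k$ corresponds to the zero code and gives RT weight $0$, consistent with the convention. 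Combining with the Sharma-Sidana reduction finishes (ii).

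For part (i), the piecewise formula is precisely the minimum symbol-pair distance of the ideal $\langle(x-1)^{t_3}\rangle$ in $\mathbb F_{p^m}[x]/\langle x^{p^k}-1\rangle$, a result that has been worked out in the literature on repeated-root cyclic codes over finite fields (e.g.\ the symbol-pair distance formulas derived in \cite{DKKSSY, SS}). The partition of $[0,p^k]$ into the cases $t_3=0$, $t_3=p^k-p^{k-\ell}+1$, $p^k-p^{k-\ell}+2\le t_3\le p^k-p^{k-\ell}+p^{k-\ell-1}$, the subranges indexed by $(\ell,\mu)$ with $1\le \mu\le p-2$, and the tail cases $t_3=p^k-p+\mu$, $t_3=p^k-1$, $t_3=p^k$, matches the piecewise table in the statement term by term. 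Thus the proof of (i) is obtained by substituting $t_3$ from our earlier theorems into this known formula and invoking $wt_{sp}(C)=wt_{sp}(Tor_3(C))$.

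The main obstacle, or rather the only nontrivial bookkeeping, is certifying that our case partition of $t_3$ exhausts $[0,p^k]$ without overlap and that each interval really produces the stated coefficient $(\mu+2)$ and factor $p^\ell$; this is a combinatorial check on how the root multiplicities of $x^{p^k}-1=(x-1)^{p^k}$ interact with consecutive symbol pairs, and it does not involve the $u$-adic structure of $\mathfrak R$ at all. Once the reduction $wt_{sp}(C)=wt_{sp}(Tor_3(C))$ and $wt_{RT}(C)=wt_{RT}(Tor_3(C))$ is in place, everything else is a matter of quoting the field-level formulas.
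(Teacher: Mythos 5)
Your proposal is correct and follows essentially the same route as the paper, which gives no separate proof of this theorem beyond the remark that it follows from \cite[Theorems 18 and 22]{SS} applied with the third torsional degree $t_3$ computed in Theorems \ref{09_07_2020_thm_1}, \ref{09_07_2020_thm_2} and \ref{09_09_2020_thm_1}. Your extra direct verification of the RT weight of $\la (x-1)^{t_3}\ra$ and your remark about checking the case partition are harmless elaborations of the same citation-based argument.
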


We close this section with the next example; we obtain symbol-pair weight and RT weight for certain cyclic codes.

\begin{example}\label{09_17_2020_ex_1}
\begin{itemize}
    \item[(i)] In Examples \ref{09_10_2020_ex_2} and \ref{09_10_2020_ex_3}, we obtain symbol-pair weight and RT weight for certain cyclic codes using Theorem \ref{09_10_2020_cor_1}:
    \[
    (wt_{sp}(C), wt_{RT}(C))=\begin{cases}
    (3,2) & \mbox{$C$ is given in Example \ref{09_10_2020_ex_2} (i)},\\
    (2,1) & \mbox{$C$ is given in Examples \ref{09_10_2020_ex_2} (ii) and \ref{09_10_2020_ex_3} (i)}, \\
    (4,4) & \mbox{$C$ is given in Example \ref{09_10_2020_ex_3} (ii)}.
    \end{cases}
    \]
    
    \item[(ii)] Let $C=\la u^2(x-1)^{51}+u^3(x-1)^{67}h(x) \ra$ be a cyclic code over $R=\mathbb F_{5^2}[u]/\la u^4 \ra$ of length $n=5^3$, where $h(x)$ is an arbitrary unit in $R[x]/\la x^n-1\ra$.
    Then, by Theorem \ref{09_07_2020_thm_1}, the third torsional degree $t_3$ of $C$ is $t_3=\min\{141, 51\}=51$ since $n-r_2+k_6=141$ and $r_2=51$.
    From Theorem \ref{09_10_2020_cor_1}, we have $wt_{sp}(C)=8$ and $wt_{RT}(C)=52$.
    
    \item[(iii)] Generally, let $C=\la u^2(x-1)^{r_2}+u^3(x-1)^{k_6}p_6(x)\ra$ be a cyclic code over $\mathbb F_{5^2}[u]/\la u^4 \ra$ of length $n=5^3$.
    For each cases, we obtain the symbol-pair weight and RT weight for the code $C$ from Theorem \ref{09_10_2020_cor_1}.\\
    (a) If $0\le r_2 \le 62$, then the third torsinoal degree $t_3$ is equal to $r_2$ by Theorem \ref{09_07_2020_thm_1}.
    Thus $wt_{RT}(C)=r_2+1$, and
    \begin{equation}\label{09_17_2020_eq_1}
    wt_{sp}(C)=\begin{cases}
    2 & \mbox{if~} t_3=0,\\
    3 & \mbox{if~} t_3=1,\\
    4 & \mbox{if~} 2 \le t_3 \le 25,\\
    6 & \mbox{if~} 26 \le t_3 \le 50,\\
    8 & \mbox{if~} 51 \le t_3 \le 62.\\
    \end{cases}
    \end{equation}
    
    (b) If $63\le r_2 \le 124$ with $2r_2-k_6\le 125$; so the third torsional degree $t_3$ is $r_2$.
    Then $wt_{RT}(C)=r_2+1$, and
    \begin{equation}\label{09_17_2020_eq_2}
    wt_{sp}(C)=\begin{cases}
    8 & \mbox{if~} 63 \le t_3 \le 75,\\
    10 & \mbox{if~} 76 \le t_3 \le 100,\\
    15 & \mbox{if~} t_3=101,\\
    20 & \mbox{if~} 102 \le t_3 \le 105,\\
    30 & \mbox{if~} 106 \le t_3 \le 110,\\
    40 & \mbox{if~} 111 \le t_3 \le 115,\\
    50 & \mbox{if~} 116 \le t_3 \le 120,\\
    75 & \mbox{if~} t_3=121,\\
    100 & \mbox{if~} t_3=122,\\
    125 & \mbox{if~} t_3=123,\\
    125 & \mbox{if~} t_3=124,\\
    0 & \mbox{if~} t_3=125.\\
    \end{cases}
    \end{equation}
    
    (c) If $63\le r_2 \le 124$ with $2r_2-k_6<125$, then $t_3=n-r_2+k_6$.
    Hence, $wt_{RT}(C)=n-r_2+k_6+1$.
    By \eqref{09_17_2020_eq_1} and \eqref{09_17_2020_eq_2}, according to the value $t_3=n-r_2+k_6$, we can obtain $wt_{sp}(C)$.
\end{itemize}
\end{example}

\end{document}